\newcommand{\blind}{0}
\newtheorem{theorem}{Theorem}
\newtheorem{lemma}[theorem]{Lemma}
\newtheorem{corollary}[theorem]{Corollary}
\definecolor{morange}{rgb}{1, 0.271, 0}
\definecolor{mmagenta}{rgb}{1,0,1}
\definecolor{mpurple}{rgb}{0.63,0.13,0.94}
\definecolor{mpink}{rgb}{1,0.078,0.576}
\definecolor{mForestGreen}{rgb}{0.133,0.545,0.133}
\begin{document}

\def\spacingset#1{\renewcommand{\baselinestretch}%
{#1}\small\normalsize} \spacingset{1}


\if0\blind
{
  \title{\bf A Goodness-of-Fit Test for Statistical Models}
  \author{Hangjin Jiang \\
    Center for Data Science, Zhejiang University\\}
  \maketitle
} \fi

\if1\blind
{
  \bigskip
  \bigskip
  \bigskip
  \begin{center}
    {\LARGE\bf A Goodness-of-Fit Test for Statistical Models}
\end{center}
  \medskip
} \fi

\bigskip
\begin{abstract}
Statistical modeling plays a fundamental role in understanding the underlying mechanism of massive data (statistical inference) and predicting the future (statistical prediction). Although all models are wrong, researchers try their best to make some of them be useful. The question here is how can we measure the usefulness of a statistical model for the data in hand? This is key to statistical prediction. The important statistical problem of testing whether the observations follow the proposed statistical model has only attracted relatively few attentions. In this paper, we proposed a new framework for this problem through building its connection with two-sample distribution comparison.  The proposed method can be applied to evaluate a wide range of models. Examples are given to show the performance of the proposed method. 
\end{abstract}

\noindent%
{\it Keywords:}  Statistical modeling, Model assessment, Distribution test, Goodness-of-fit
\vfill

\newpage
\spacingset{1.45} 

\section{Introduction}
Statistical prediction and inference is the core of statistics. Statistical prediction is a very important contribution of statistical research to the society, and it depends more strongly than statistical inference on the statistical model proposed by the expert or learned from the past.
However, as George Box said that ``all models are wrong, but some are useful". Here comes a question that  ``To what extent, they (statistical models) are useful?" Or we may ask in another way that ``Is our statistical model built for given dataset acceptable or good enough?" This paper seeks a statistical answer to this question.

Here is a motivating example. Assume that we have observations $\{(x_i,y_i): i=1, 2, \cdots, 200\}$ from model $Y=10(X-0.2)^2+\epsilon$, with $X \sim U(0,1)$ and $\epsilon \sim N(0,1)$.  The scatter plot of these observations is shown in the left panel of Figure~\ref{example}. The simplest way of analyzing these observations is to assume that they follow a linear model, i.e, $Y=\beta_0+\beta_1X+\eta$.  We obtain, by least square method, $\hat{\beta}_0=-1.29 \pm 0.1784$ with $ \text{p-value}=1.09\times 10^{-11}$ and  $\hat{\beta}_1=5.93 \pm 0.3061$ with $\text{p-value}< 2\times 10^{-16}$, and the fitted line is shown as the blue solid line in left panel of Figure~\ref{example}.  

\begin{center}
	\begin{figure}[!htbp]
		\centering
		\includegraphics[scale=0.9]{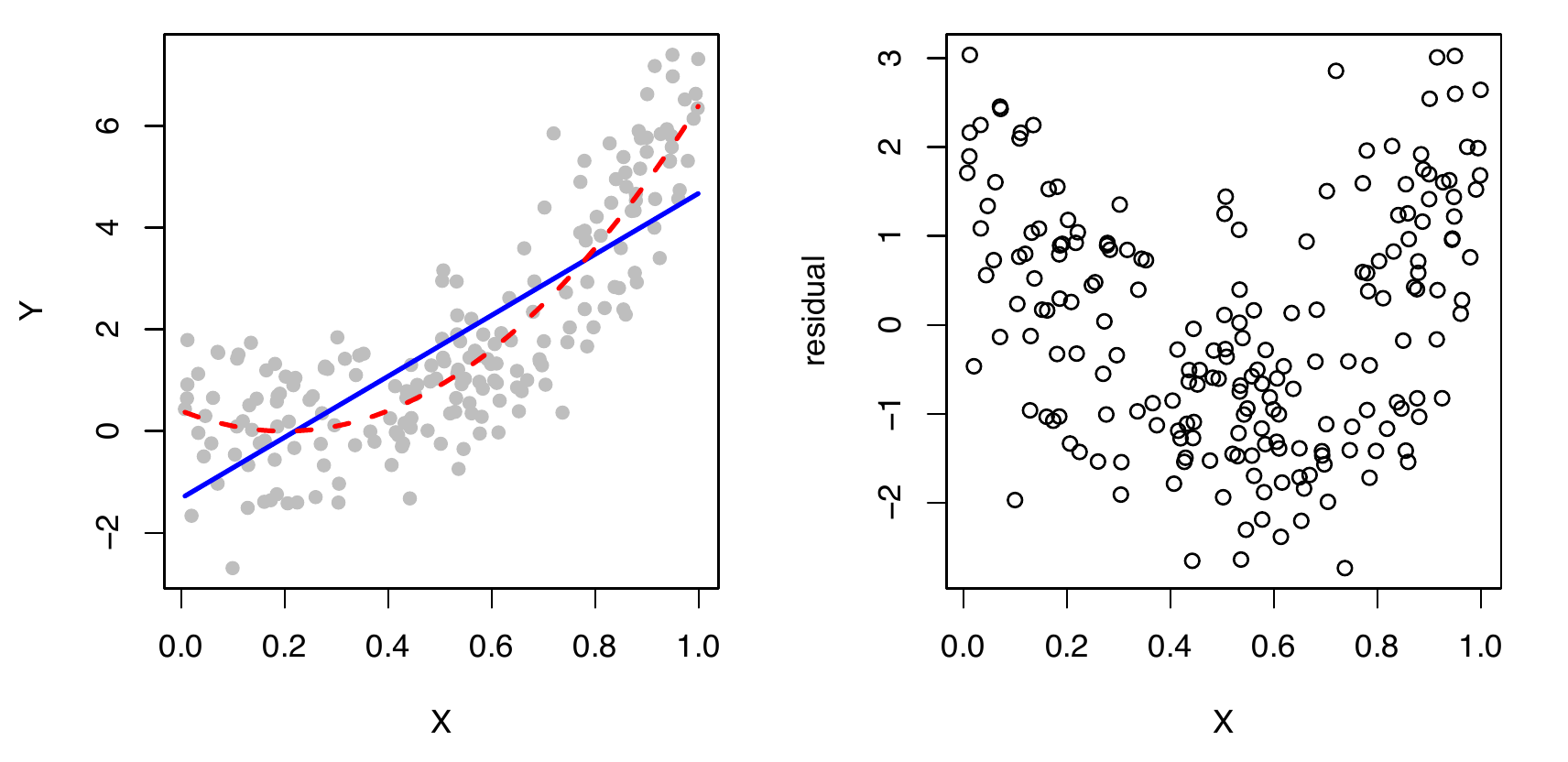}\\
		\caption{A motivating example. Left panel: the gray points are observations from model $Y=10(X-0.2)^2+\epsilon$, with $X \sim U(0,1)$ and $\epsilon \sim N(0,1)$. The blue solid line is obtained by linear regression of these observations, and the red dashed line is the true model.  Right panel is the plot of residuals.  As one can tell from the plot, there is a quadratic trend between the covariate and the residual, but gives no information on to what extent we can believe in the linear model. The p-value from our method for this linear model is 0.004, which indicates the linear model is not good enough.}
		\label{example}
	\end{figure}
\end{center}

The p-value of $\beta$ only tells the response $Y$ strongly depends on $X$ but noting about the question we cared much about that whether the linear model is good enough for these observations? A usual way to answer this question is to fit a more complicated model, for example, a second order model, to the data, and compare it with the linear model.  However, there are so many models and we should try, in principle, each of them to answer this question, which is impossible.  One may resort to the adjusted $R^2$, which is 0.6536 for this case. However, it also tells less information about the goodness-of-fit of the linear model, since there exists two datasets with the same adjusted $R^2$, but one dataset truly follows a linear model and the other one does not (Details are given in Section 3). On the other hand, the adjusted $R^2$ is not applicable to many applications.

Alternatively, by applying our method (see Section 2) to this example, we get a p-value=0.004, which shows a strong evidence against the claim that the linear model is good enough (Details are given in Section 3) and consists with the quadratic trend between the covariate and the residual shown in right panel of Figure~\ref{example}.  The merit of our method is that we do not need to compare the current model with a more complex model to tell whether the current model is correct or good enough, and also we do not need to check the model by looking into the residual plot, which is unavailable for many models, such as logistic model and Poisson model.

The adjusted $R^2$ is a summary statistic of the residuals. Along this direction,  there are many model checking methods proposed for linear regression models defined in equation~(\ref{tmodel}) by analyzing the behavior of residuals based on the assumption of $\varepsilon$, such as $E\varepsilon=0$ or $E(\varepsilon|X)=0$, see for example, \citet{stute1998model,stute1997nonparametric,stute2002model,xia2009model,khmaladze2004martingale, van2008goodness, stute1998bootstrap, zheng1996consistent}.  There are also some other methods proposed for other specified models, for example, \citet{guo2016model} considered model checking for parametric single-index models, \citet{van2008goodness} and \citet{neumeyer2010estimating} considered the case for heteroscedastic regression model, and \citet{zhang2011testing} considered the case for non-stationary time series model. We refer to \citet{gonzalez2013updated} for a review. Note that all of these methods can not be applied to generalized linear models \citep{mccullagh2018generalized},  since there is no residuals for these models. Importantly, they try to answer the question that whether the observations are from the specified model, but we try to answer whether the working model is good enough for the observations in hand.

\cite{fan2001generalized} proposed to use generalized likelihood ratio test to test whether the true model is linear, i.e., testing $H_0:$ $f(\textbf{X})=\beta_0+\beta_1\textbf{X}$ v.s. $H_1:$ $f(\textbf{X})\neq \beta_0+\beta_1\textbf{X}$. Here $H_1$ is usually taken as $f(\textbf{X}) \in S_{k}$ (e.g. $S_k=\{f\in L^2[0,1]: \int_0^1 f^{(k)}(x)^2 dx \leq C\}$), a larger functional space containing $H_0$. Essentially, it is to seek the best approximation of $f(\textbf{X})$ in $S_k$, and compare it with the model under $H_0$. Although it can be done through basis function expansion and work well in some cases, it still has some difficulties in many real applications. Also, there are some works focusing on identifying the structure of the true model $f(\textbf{X})$, say, is it linear or not? which can be achieved by adding a penalty on high order terms, see for example \cite{cao2016sieve}. 

\cite{tsay1992model} proposed a model checking procedure via parametric Bootstrap.  The underlying principle is that if the model is correct or good enough, the Bootstrap 95\% CI for each response $y_i$ should contain the observation $y_i$ with probability close to 95\%.  Finally, a plot with acceptance envelopes is used as a final diagnosis of the model. However, there is no quantitative measurement on the quality of the proposed model. Although we also use the idea of Bootstrap, our method tending to provide a overall evaluation of the working model is totally different from the method in \cite{tsay1992model} that tends to focus on one aspect of the working model.
This Bootstrap idea was also applied to Bayesian model checking called posterior predictive assessment  based on $\chi^2$ discrepancy between the observed response and predicted response \citep{gelman1996posterior}.  Comparing with this method, our method directly focus on the discrepancy between the true model and the working model through comparing the joint distribution between $(\textbf{X}, Y)$ and  $(\textbf{X}, Y^*)$, where $Y^*$ is the predicted response. 
In addition, our method is applicable to both frequentist and Bayesian framework, and computational efficient.

The rest of this paper is organized as follows: In Section 2, we build the connection between measuring the goodness-of-fit of the working model and two-sample distribution comparison.  Furthermore, we proposed a new test statistic, Maximum Adjusted Chi-squared (MAC) statistic, for comparing two unknown distributions based on two independent samples. In Section 3, simulations and real data examples are given to evaluate the performance of our method. Section 4 concludes this paper with some further discussions.

\section{Statistical Model Assessment}
In this section, we will firstly build a connection between statistical model checking and two-sample distribution comparison. After that, we will introduce our new method for the later problem. 
\subsection{Linear  Model}
Firstly, we consider model assessment for the following linear model
\begin{equation}\label{tmodel}
Y=c+f(\textbf{X})+\varepsilon, \text{with } E\varepsilon=0 \text{ and } E\varepsilon^2 < +\infty,
\end{equation}
where $\textbf{X}$ is a random vector in $R^p$ with $p \geq 1$, and $Y$ is a random variable. To make the constant $c$ be estimable, we need the following identifiable assumption.

\textbf{Assumption A.} We assume that $Ef(\textbf{X})=0$, and $c=0$ for simplicity.

The goal of statistical modeling is to find a parametric model $g(\textbf{X}|\theta)$ to better approximate the unknown function $f(
\textbf{X})$ in model~(\ref{tmodel}), where parameter $\theta$ is unknown and will be estimated from observations $\{(\textbf{x}_i, y_i): i=1,2,\cdots, n\}$. If the distribution of $\varepsilon$ is assumed to be normal, which is common in applied statistics, $\theta$ can be estimated through maximizing the likelihood function or Bayesian method.  If one does not put any assumption on the distribution of $\varepsilon$, $\theta$ can be estimated by least square method or estimating equation approach.

The key problem we consider here is to measure the distance between the true model $f(\textbf{X})$ and the working model $g(\textbf{X}|\theta)$.  If the true model is known, the Kullback-Leibler Divergence defined in equation~(\ref{kl}) can be used to measure the goodness of the working model, which also indicates the principle of maximum likelihood.
\begin{equation}\label{kl}
KL(f,g)=\int f \log(f/g) d\mu.
\end{equation}
However, the real situation here is that we have no information about the true model $f$, and what we only have is the observations $\{(\textbf{x}_i, y_i): i=1,2,\cdots, n\}$ from the true model.  Thus, our concern here is to measure the goodness of the working model based on observations from the true model. We summarize this problem in the following.

\textbf{Problem A: Model Assessment.} \textit{Assume that we proposed a working model $g(\textbf{X}|\theta)$ for observations from model~(\ref{tmodel}), to what extent we can believe in the model $g(\textbf{X}|\hat{\theta}_n)$? where $\hat{\theta}_n$ is the estimator of $\theta$. In other words, is there any chance for us to improve our working model?}

To motivate our method, we assume that the proposed working model with identifiable assumption (\textbf{Assumption B}) for observations from model~(\ref{tmodel}) is,
\begin{equation}\label{wmodel}
Y=c+g(\textbf{X}|\theta)+\eta, \text{with } \eta=f(\textbf{X})-g(\textbf{X}|\theta)+\varepsilon.
\end{equation}

\textbf{Assumption B.} $Eg(\textbf{X}|\theta)=0$ for any $\theta$.

\textbf{Remark 1.} \textit{When we do statistical analysis, e.g, statistical inference and statistical prediction, based on working model~(\ref{wmodel}), the underlying assumption we made is that the true model $f(\textbf{X})$ belongs to the parametric model $g(\textbf{X}|\theta)$.}

If the true model $f(\textbf{X})$ belongs to the parametric model $g(\textbf{X}|\theta)$, there exits some $\theta_0$ such that $f(\textbf{X})=g(\textbf{X}|\theta_0)$.  $\eta$ in model~(\ref{wmodel}) can still satisfy $E\eta=0$ under \textbf{Assumption A} and \textbf{B}, and $\theta$ can be estimated by least square method. With this working model, \textbf{Problem A} can be formulated as the following statistical hypothesis testing problem:
\begin{center}
	$H_0$: The true model $f(\textbf{X})$ belongs to the parametric model $g(\textbf{X}|\theta)$; $H_1$: it does not.
\end{center}
Note that to make statistical inference on this testing problem, unlike the generalized likelihood ratio test \citep{fan2001generalized}, we do not need to specify the alternative hypothesis or estimate $f$ in a larger functional space, and thus, our method is very simple to use in real applications.

Denoting the estimates of $\theta$ in model~(\ref{wmodel}) as $\hat{\theta}_n$, 
we get estimates of $\eta$ denoted by $\hat{\eta}=\{\hat{\eta}_i: i=1, 2, \cdots, n \}$ corresponding to $\{(\textbf{x}_i, y_i): i=1,2,\cdots, n\}$ with $\hat{\eta}_i=y_i-g(\textbf{x}_i|\hat{\theta}_n)$.
Let $\eta^*=\{\eta^*_i: i=1, 2, \cdots, n\}$ be an i.i.d bootstrap sample from $\hat{\eta}$ (see Remark 1), then $Y^*=g(\textbf{X}|\hat{\theta}_n)+\eta^*$ is an i.i.d bootstrap sample of $Y$.  With these information in hand, we have the following lemma.

\textbf{Remark 2.}  \textit{The bootstrap samples of $Y$, $Y^*$ can also be obtained through parametric bootstrap with the assumption  $\eta \sim N(0,\sigma_{\eta}^2)$, where $\sigma_{\eta}$ can be estimated from $\hat{\eta}=\{\hat{\eta}_i: i=1, 2, \cdots, n \}$.}

\begin{lemma} Assuming that  $\hat{\theta}_n \rightarrow \theta_0 \text{ a.s.} $ as $n\rightarrow \infty$, we have $P\{(\textbf{X},Y^*)\overset{d}{=}(\textbf{X}, Y)\}=1$, if and only if the true model $f(\textbf{X})$ belongs to the parametric model $g(\textbf{X}|\theta)$, where $X\overset{d}{=} Y$ means that $X$ and $Y$ has the same distribution.
\end{lemma}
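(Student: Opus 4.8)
The plan is to prove the biconditional by passing to the $n\to\infty$ limit and then comparing the conditional laws of $Y$ and $Y^{*}$ given $\textbf{X}$. First I would pin down the limiting behaviour of the bootstrap response. Writing $\hat\eta_i=y_i-g(\textbf{x}_i\mid\hat\theta_n)=\big(f(\textbf{x}_i)-g(\textbf{x}_i\mid\hat\theta_n)\big)+\varepsilon_i$, the hypothesis $\hat\theta_n\to\theta_0$ a.s., combined with continuity of $g(\textbf{x}\mid\cdot)$ and a Glivenko--Cantelli / uniform law of large numbers argument, should give that almost surely the empirical residual distribution $\hat F_n$ converges uniformly to the law $F_0$ of $\eta_0:=f(\textbf{X})-g(\textbf{X}\mid\theta_0)+\varepsilon$, and that $g(\textbf{X}\mid\hat\theta_n)\to g(\textbf{X}\mid\theta_0)$ a.s. Since the bootstrap draw $\eta^{*}$ is, by construction, independent of a fresh copy of $\textbf{X}$, this identifies the a.s.\ limiting law of $(\textbf{X},Y^{*})$ as the law $Q$ of $(\textbf{X},\,g(\textbf{X}\mid\theta_0)+\tilde\eta_0)$, where $\tilde\eta_0\sim F_0$ is drawn independently of $\textbf{X}$. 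Letting $P_{XY}$ denote the law of $(\textbf{X},f(\textbf{X})+\varepsilon)$, the assertion $P\{(\textbf{X},Y^{*})\overset{d}{=}(\textbf{X},Y)\}=1$ is then equivalent to $Q=P_{XY}$.

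For the ``if'' direction: if $f(\cdot)=g(\cdot\mid\theta^{*})$ for some $\theta^{*}$, then because $\hat\theta_n$ is the least-squares (or estimating-equation) estimator, its a.s.\ limit $\theta_0$ minimizes $E\big(Y-g(\textbf{X}\mid\theta)\big)^{2}$, which under correct specification and identifiability of the parametrization equals $\theta^{*}$; hence $g(\textbf{X}\mid\theta_0)=f(\textbf{X})$ a.s.\ and $F_0=F_\varepsilon$. Then $Q$ is the law of $(\textbf{X},\,f(\textbf{X})+\tilde\varepsilon)$ with $\tilde\varepsilon\sim F_\varepsilon$ independent of $\textbf{X}$, which coincides with $P_{XY}$ because $\varepsilon\perp\textbf{X}$ in model~(\ref{tmodel}). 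Thus $Q=P_{XY}$. For the ``only if'' direction, suppose $Q=P_{XY}$. Equality of joint laws forces equality of the conditional distributions of $Y^{*}$ and $Y$ given $\textbf{X}=x$ for $P_X$-a.e.\ $x$, hence in particular equality of conditional means. From $Q$ one gets $E[Y^{*}\mid\textbf{X}=x]=g(x\mid\theta_0)+E\eta_0$, where $E\eta_0=Ef(\textbf{X})-Eg(\textbf{X}\mid\theta_0)+E\varepsilon=0$ by Assumptions A and B and $E\varepsilon=0$; from $P_{XY}$ one gets $E[Y\mid\textbf{X}=x]=f(x)$ (using $E[\varepsilon\mid\textbf{X}]=0$). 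Equating these yields $f(x)=g(x\mid\theta_0)$ for $P_X$-a.e.\ $x$, i.e.\ $f$ belongs to $\{g(\cdot\mid\theta)\}$.

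I expect the main obstacle to be the rigorous justification of the first step --- the a.s.\ uniform convergence of $\hat F_n$ to $F_0$ --- which does not follow from $\hat\theta_n\to\theta_0$ alone and needs regularity conditions (continuity of $g(\textbf{x}\mid\theta)$ in $\theta$, an envelope/domination condition so that a uniform LLN applies to the class $\{(\textbf{x},y)\mapsto\mathbf{1}\{y-g(\textbf{x}\mid\theta)\le t\}\}$, and continuity of $F_0$ to upgrade pointwise to uniform convergence). Everything after that is bookkeeping with the mean-zero conditions. One should also flag that the argument uses $\varepsilon\perp\textbf{X}$ in model~(\ref{tmodel}), not just $E\varepsilon=0$: full independence is what matches the conditional error laws in the ``if'' direction, and $E[\varepsilon\mid\textbf{X}]=0$ is what is used in the ``only if'' direction; with only $E\varepsilon=0$ one obtains the statement with $f$ replaced by the regression function $x\mapsto E[Y\mid\textbf{X}=x]$.
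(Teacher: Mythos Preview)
Your proposal is correct and follows the same skeleton as the paper's proof: pass to the limit $\hat\theta_n\to\theta_0$, then compare the conditional laws of $Y$ and $Y^{*}$ given $\textbf{X}$, invoking Assumptions~A and~B to pin down the constant. The ``if'' direction is essentially identical to the paper's. In the ``only if'' direction the paper argues by writing $\eta^{*}=f(\textbf{x}^{*})-g(\textbf{x}^{*}\mid\hat\theta_n)+\varepsilon^{*}$ and matching the conditional representation $Y^{*}\mid\textbf{X}$ against $Y\mid\textbf{X}$ directly, which amounts (after fixing $\textbf{x}^{*}$ and letting $n\to\infty$) to the identity $f(\textbf{X})=g(\textbf{X}\mid\theta_0)+c$ with $c=f(\textbf{x}^{*})-g(\textbf{x}^{*}\mid\theta_0)$, and then uses Assumptions~A and~B to force $c=0$. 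Your route via conditional means---equating $E[Y^{*}\mid\textbf{X}=x]=g(x\mid\theta_0)$ with $E[Y\mid\textbf{X}=x]=f(x)$---is the same idea stated more cleanly, since Assumptions~A and~B already kill the additive constant before the comparison rather than after. Your version is also more transparent about what is actually being used: the paper's matching argument implicitly compares conditional means too, but phrases it as a distributional identity. Finally, the regularity issues you flag (uniform convergence of the residual empirical distribution, continuity of $g$ in $\theta$, and the need for $\varepsilon\perp\textbf{X}$ rather than just $E\varepsilon=0$) are genuine gaps that the paper's proof leaves implicit; your identification of them is accurate and does not reflect any defect in your argument.
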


\begin{proof}
	`If' part: Since the true model $f(\textbf{X})$ belongs to the parametric model $g(\textbf{X}|\theta)$, there exists a $\theta_0$ such that $f(\textbf{X})=g(\textbf{X}|\theta_0)$. On the other hand, we have $g(\textbf{X}|\hat{\theta}_n)\rightarrow g(\textbf{X}|\theta_0)=f(\textbf{X})$, a.s, as $n\rightarrow \infty$, by $\hat{\theta}_n \rightarrow \theta_0 \text{ a.s} $. Thus, the bootstrap sample $\eta^*$ is the bootstrap sample of $\varepsilon$ in model~(\ref{tmodel}). Finally, the joint density function of $(\textbf{X}, Y^*)$, $p(\textbf{X}, Y^*)=p(Y^*|\textbf{X})p(\textbf{X})=p(Y|\textbf{X})p(\textbf{X})=p(\textbf{X}, Y)$, as $n\rightarrow \infty$. \\
	`only If' part: According to the definition of $\eta^*$, we can write it as $\eta^*=f(\textbf{x}^*)-g(\textbf{x}^*|\hat{\theta}_n)+\varepsilon^*$, where $\textbf{x}^*$ is one of $\textbf{x}_i$ in $\{\textbf{x}_1, \cdots, \textbf{x}_n\}$, and $\varepsilon^*$ the corresponding unknown $\varepsilon_i$.  Given $\textbf{X}$, we have $Y^*=g(\textbf{X}|\hat{\theta}_n)+\eta^*=g(\textbf{X}|\hat{\theta}_n)+f(\textbf{x}^*)-g(\textbf{x}^*|\hat{\theta}_n)+\varepsilon^*$. Comparing with the conditional distribution of $Y|\textbf{X}$, we know that if $f(\textbf{X})=g(\textbf{X}|\hat{\theta}_n)+f(\textbf{x}^*)-g(\textbf{x}^*|\hat{\theta}_n)$, we have $(\textbf{X},Y^*)\overset{d}{=}(\textbf{X}, Y)$. Fixing $\textbf{x}^*$, $c_n=f(\textbf{x}^*)-g(\textbf{x}^*|\hat{\theta}_n)$ is a constant approaching $c=f(\textbf{x}^*)-g(\textbf{x}^*|\theta_0)$, as $n\rightarrow \infty$, and $f(\textbf{X})=g(\textbf{X}|\hat{\theta}_n)+c_n \rightarrow g(\textbf{X}|\theta_0)+c$ should hold for any $\textbf{X}$. Under \textbf{Assumption A} and \textbf{B}, this is only the case when $f(\textbf{X})=g(\textbf{X}|\theta_0)$.
\end{proof}
\textbf{Lemma 1} bridges the original problem of measuring the goodness of the working model to the problem of two-sample distribution comparison, i.e., measuring the distance between the distribution of observations $\{(\textbf{x}_i, y_i): i=1,2,\cdots, n\}$ from the true model and the distribution of bootstrap samples  $\{(\textbf{x}_i, y_i^*): i=1,2,\cdots, n\}$ from the working model.  In Section 2.3, we will introduce a new method for comparing two distributions of two independent samples.
\subsection{Generalized Linear  Model}
In this section, we consider model assessment for the generalized linear model:
\begin{equation}\label{tgmodel}
Y \sim f(y; \beta, \phi)=\exp\{\frac{y\beta-b(\beta)}{\phi+c(y,\phi)}\}, \ \ h(\mu)=h(EY)=c+f(\textbf{X}),
\end{equation}
where the response $Y$ has a distribution belonging to the exponential family and the covariates $\textbf{X}$ are linked to the response through a link function $h$.

Under this setting, we want to assess the following working model:
\begin{equation}\label{wgmodel}
Y \sim f(y; \beta, \phi)=\exp\{\frac{y\beta-b(\beta)}{\phi+c(y,\phi)}\}, \ \ h(\mu)=h(EY)=c+g(\textbf{X}|\theta).
\end{equation}
Similar to the linear model, we have following result.  Its proof is simple, and omitted.
\begin{lemma} Assuming that  $\hat{\theta}_n \rightarrow \theta_0 \text{ a.s.} $ as $n\rightarrow \infty$, we have $P\{(\textbf{X},Y^*)\overset{d}{=}(\textbf{X}, Y)\}=1$, if and only if the true model $f(\textbf{X})$ belongs to the parametric model $g(\textbf{X}|\theta)$, where $X\overset{d}{=} Y$ means that $X$ and $Y$ has the same distribution, where $Y^*$ is generated from model~(\ref{wgmodel}) with $\theta$ replaced by its estimator $\hat{\theta}_n$.
\end{lemma}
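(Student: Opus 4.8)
The plan is to mirror the proof of \textbf{Lemma 1}, replacing the additive residual structure by the fact that the conditional law of the response lies in a one‑parameter exponential family whose natural parameter $\beta$ is an invertible function of the conditional mean $\mu$ (through $\mu=b'(\beta)$), and whose conditional mean is in turn an invertible function, via the link $h$, of the linear predictor. Write $P_X$ for the common marginal law of $\textbf{X}$. Since both $(\textbf{X},Y)$ and $(\textbf{X},Y^*)$ are built with this same marginal, proving $(\textbf{X},Y^*)\overset{d}{=}(\textbf{X},Y)$ is equivalent to proving that the conditional laws of $Y\mid\textbf{X}$ and $Y^*\mid\textbf{X}$ coincide for $P_X$‑almost every $\textbf{X}$.

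For the `if' direction, assume $f(\textbf{X})=g(\textbf{X}\mid\theta_0)$ for some $\theta_0$. By the assumed consistency $\hat{\theta}_n\rightarrow\theta_0$ a.s.\ and continuity of $g(\textbf{x}\mid\cdot)$ in $\theta$, we get $g(\textbf{X}\mid\hat{\theta}_n)\rightarrow g(\textbf{X}\mid\theta_0)=f(\textbf{X})$ a.s., so the two linear predictors $c+g(\textbf{X}\mid\hat{\theta}_n)$ and $c+f(\textbf{X})$ agree in the limit; applying $h^{-1}$ gives $E(Y^*\mid\textbf{X})\rightarrow E(Y\mid\textbf{X})$, and since $\beta=(b')^{-1}(\mu)$ the natural parameters agree in the limit, hence the conditional densities $f(\cdot;\beta^*,\phi)$ converge to $f(\cdot;\beta,\phi)$. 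Multiplying by the common density of $\textbf{X}$ yields $p(\textbf{X},Y^*)\rightarrow p(\textbf{X},Y)$, i.e.\ $P\{(\textbf{X},Y^*)\overset{d}{=}(\textbf{X},Y)\}=1$ in the limit.

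For the `only if' direction, suppose $(\textbf{X},Y^*)\overset{d}{=}(\textbf{X},Y)$. Conditioning on $\textbf{X}$ and using that the exponential family $f(y;\beta,\phi)$ is identifiable in its natural parameter (and dispersion $\phi$), equality of the conditional laws forces $\beta^*(\textbf{X})=\beta(\textbf{X})$ for $P_X$‑a.e.\ $\textbf{X}$, hence $E(Y^*\mid\textbf{X})=E(Y\mid\textbf{X})$, hence $h(E(Y^*\mid\textbf{X}))=h(E(Y\mid\textbf{X}))$, i.e.\ $c+g(\textbf{X}\mid\hat{\theta}_n)=c+f(\textbf{X})$ in the limit; letting $n\rightarrow\infty$ and using $\hat{\theta}_n\rightarrow\theta_0$ gives $g(\textbf{X}\mid\theta_0)=f(\textbf{X})$ for $P_X$‑a.e.\ $\textbf{X}$, so $f$ belongs to the parametric family $g(\textbf{X}\mid\theta)$. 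The identifiability conditions analogous to \textbf{Assumption A} and \textbf{Assumption B} make $c$ common to both models, so it simply cancels and no additive correction term appears --- this is why the generalized‑linear argument is in fact a little cleaner than the linear one, where the bootstrap of $\eta$ forced the nuisance constant $c_n=f(\textbf{x}^*)-g(\textbf{x}^*\mid\hat{\theta}_n)$ into the `only if' step.

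The only genuinely delicate point, exactly as in \textbf{Lemma 1}, is the bookkeeping of the ``as $n\rightarrow\infty$'' limits, since $Y^*$ is generated using the random estimator $\hat{\theta}_n$; the substantive ingredients --- invertibility and continuity of the link $h$, invertibility of $b'$ relating the natural parameter to the mean, and identifiability of the exponential family in $(\beta,\phi)$ --- are all standard, which is why the text can legitimately state that the proof is simple and omit it. One should, however, record the implicit regularity hypotheses (measurability/continuity of $g$ and $h$, so that $g(\textbf{X}\mid\hat{\theta}_n)\to g(\textbf{X}\mid\theta_0)$ and the composed maps behave well) that are assumed throughout.
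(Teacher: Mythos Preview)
The paper explicitly omits the proof of this lemma, stating only that ``Its proof is simple, and omitted,'' so there is no paper argument to compare against directly. Your proposal is correct and follows the natural route the paper implicitly points to: reduce equality of joint laws to equality of conditional laws given $\textbf{X}$, and then use identifiability of the exponential family in its natural parameter together with the invertibility of $b'$ (so $\beta$ is determined by $\mu$) and of the link $h$ (so $\mu$ is determined by the linear predictor) to pass back and forth between equality of conditional distributions and equality of $f(\textbf{X})$ and $g(\textbf{X}\mid\theta_0)$. Your remark that the generalized-linear case is in fact cleaner than Lemma~1 --- because $Y^*$ is drawn directly from the fitted conditional law rather than by bootstrapping residuals, so no nuisance constant $c_n=f(\textbf{x}^*)-g(\textbf{x}^*\mid\hat{\theta}_n)$ enters the `only if' step --- is exactly right and explains why the authors felt comfortable omitting the details. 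The implicit regularity assumptions you flag (continuity of $g$ in $\theta$, invertibility and continuity of $h$ and $b'$) are standard in the GLM setting and are tacitly assumed throughout the paper.
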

\subsection{Two-sample Distribution Comparison}
According to Lemma 1 and Lemma 2, in this section, we will consider the following two-sample distribution comparison problem: 

\textbf{Problem B: Distribution Comparison.}  \textit{Given observations of $(\textbf{X},\textbf{Y}) \in R^p\times R^q$, $\{(\textbf{x}_i, \textbf{y}_i): i=1, 2, \cdots, n\}$ and  observations of $(\textbf{X},\textbf{Y}^*)\in R^p\times R^q$, $\{(\textbf{x}_i, \textbf{y}_i^*): i=1, 2, \cdots, n\}$, where $p \geq 1$ and $q \geq 1$ are two positive integers. We want to test null hypothesis:  $(\textbf{X},\textbf{Y}^*)\overset{d}{=}(\textbf{X}, \textbf{Y})$}

There are many methods proposed for this problem, see \citet{thas2010comparing} for a review and see \citet{zhou2017two, chen2017new} and \citet{kim2018robust} for recent developments on this direction.
\citet{jiang2020} introduced a new class of statistics, Maximum Adjusted Chi-squared (MAC) test statistics, for this problem, which is promising as seen from their results. In the following, we give a brief introduction on MAC.

To define our new statistic, some notations are needed. Let $\textbf{L}_i=(\textbf{w}_i, \textbf{v}_i), i=1, 2, \cdots, k$, be $k$ points in  
$R^p\times R^q$, where $\textbf{w}_i \in R^p$ and $\textbf{v}_i \in R^q$. 
Define  
$A_{\textbf{L}_i,\textbf{L}_j}=\{(\textbf{x},\textbf{y}) \in R^p\times R^q:
d(\textbf{x},\textbf{w}_{i})\leq d(\textbf{w}_{i},\textbf{w}_{j})\}$, and
$B_{\textbf{L}_i,\textbf{L}_j}=\{(\textbf{x},\textbf{y}) \in R^p\times R^q:
d(\textbf{y}, \textbf{v}_{i})\leq d(\textbf{v}_{i},\textbf{v}_{j})\}$.
Furthermore, we define $A_{11}=A_{\textbf{L}_i,\textbf{L}_j}\bigcap
B_{\textbf{L}_i,\textbf{L}_j}$,
$A_{12}=A_{\textbf{L}_i,\textbf{L}_j}^c\bigcap
B_{\textbf{L}_i,\textbf{L}_j}$,
$A_{21}=A_{\textbf{L}_i,\textbf{L}_j}\bigcap
B_{\textbf{L}_i,\textbf{L}_j}^c$,
$A_{22}=A_{\textbf{L}_i,\textbf{L}_j}^c\bigcap
B_{\textbf{L}_i,\textbf{L}_j}^c$. Let $P_{ij}=\sum_{k=1}^n
I((\textbf{x}_k, \textbf{y}_k) \in A_{ij})$, $Q_{ij}=\sum_{k=1}^n I((\textbf{x}_k, \textbf{y}_k^*)
\in A_{ij})$, and $R_{ij}=P_{ij}+Q_{ij}$, for $i=1, 2; j=1, 2$,
the local test statistic at
$(\textbf{L}_i,\textbf{L}_j)$ is defined as
\begin{equation}\label{local}
S(\textbf{L}_i,\textbf{L}_j)=\sum_{k=1}^2\sum_{l=1}^2 \frac{(P_{kl}-Q_{kl})^2}{R_{kl}}.
\end{equation}
If $R_{kl}=0$, we define the corresponding term in the
above formula as 0. It can be shown that $S(\textbf{L}_i,\textbf{L}_j) \rightarrow \chi_3^2$, as $n\rightarrow \infty$ (see Lemma~\ref{lA1} in Appendix A.2), and thus $S(\textbf{L}_i,\textbf{L}_j)$ is called adjusted chi-squared statistic. Note that the definition of the local statistic is different from that given in \citet{jiang2020}.

Define $\textbf{D}=(\textbf{X}, \textbf{Y})$ and $\textbf{D}^*=(\textbf{X}, \textbf{Y}^*)$, the Maximum Adjusted Chi-squared (MAC) test statistic is defined as the maximum of all local statistics:
\begin{equation}
\text{MAC}(\textbf{D}, \textbf{D}^*)=\max_{1\leq i< j
	\leq k} S(\textbf{L}_i,\textbf{L}_j).
\end{equation}
We reject $H_0:  \textbf{D} \overset{d}{=} \textbf{D}^*$ if $\text{MAC}(\textbf{D}, \textbf{D}^*)>c_0$, where $c_0$
is a positive threshold.  More discussions on the definition of MAC are given in Appendix A.1.

\textbf{Remark 3.} \textit{The principle underlying MAC is as follows.  For any partition of the space $\Omega=A_{11}\bigcup A_{12} \bigcup A_{21} \bigcup A_{22}$, we can construct a $\chi^2$ test statistic $S$, and $S\rightarrow \chi_3^2$ if $H_0:  \textbf{D} \overset{d}{=} \textbf{D}^*$ is true.  In other words, the local statistic $S(\textbf{L}_i,\textbf{L}_j)$ is defined on a particular partition of the space $\Omega$. In principle, we should check any kind of partition of the space to see whether there is any difference of these two distributions on this partition.  In real applications, due to limited observations from the underlying distribution, we only need to check a part of them. This idea is similar to that behind HHG \citep{heller2012consistent}.
}

To establish the consistency of MAC statistic, we define the population version of the local test statistic at $(\textbf{L}_i,\textbf{L}_j)$ as
\begin{equation}\label{plocal}
S^p(\textbf{L}_i,\textbf{L}_j)=n\sum_{k=1}^2\sum_{l=1}^2 \frac{(p_{kl}-q_{kl})^2}{r_{kl}}.
\end{equation}
where $p_{kl}=P_{\textbf{D}}(A_{kl})$, $q_{kl}=P_{\textbf{D}^*}(A_{kl})$, and $r_{kl}=p_{kl}+q_{kl}$. Furthermore, the population version of MAC statistic is given by
\begin{equation}
\text{MAC}^p(\textbf{D}, \textbf{D}^*)=\max_{1\leq i < j
	\leq k} S^p(\textbf{L}_i,\textbf{L}_j).
\end{equation}
The consistency of MAC is established in Theorem \ref{consist}, whose proof is given in Appendix A.2.

\begin{theorem}\label{consist} Following above notations and assuming that $r_{ij}\geq \eta>0 $ for any $i$ and $j$, then for any $\epsilon>0$, let $\delta=\frac{4n\epsilon}{\eta}(\frac{1}{\eta}+2)$, we have
	$$
	P(|{\rm MAC}(\textbf{D}, \textbf{D}^*)-{\rm MAC}^p(\textbf{D}, \textbf{D}^*)|>\delta)< 4k^2e^{-n\epsilon^2/8}.
	$$
	Furthermore, we have: (1) under $H_0$, $P({\rm MAC}(\textbf{D}, \textbf{D}^*)\leq O(\sqrt{n\log k}))\rightarrow 1$, as $n\rightarrow \infty$; (2) under $H_1$, $P({\rm MAC}(\textbf{D}, \textbf{D}^*)> O(n))\rightarrow 1$, as $n\rightarrow \infty$.
\end{theorem}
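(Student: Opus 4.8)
The plan is to prove the concentration inequality first and then extract the two asymptotic statements (1) and (2) as corollaries. For the concentration bound, I would fix a pair $(\textbf{L}_i,\textbf{L}_j)$ and compare the local statistic $S(\textbf{L}_i,\textbf{L}_j)$ with its population version $S^p(\textbf{L}_i,\textbf{L}_j)$. The counts $P_{kl}=\sum_{m=1}^n I((\textbf{x}_m,\textbf{y}_m)\in A_{kl})$ and $Q_{kl}$ are sums of i.i.d.\ indicators, so $P_{kl}/n$ and $Q_{kl}/n$ concentrate around $p_{kl}$ and $q_{kl}$ respectively. By Hoeffding's inequality, $P(|P_{kl}/n - p_{kl}| > \epsilon) < 2e^{-2n\epsilon^2}$ and similarly for $Q_{kl}$; a union bound over the four cells $kl$ and over the at most $k^2$ pairs $(\textbf{L}_i,\textbf{L}_j)$ yields a uniform bound. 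The stated form $4k^2 e^{-n\epsilon^2/8}$ suggests the authors use a slightly lossier Hoeffding constant (or a two-sided McDiarmid-type bound), so I would just aim to match whatever constant falls out cleanly.

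The main technical step is then to turn a bound on $|P_{kl}/n-p_{kl}|$ and $|Q_{kl}/n-q_{kl}|$ into a bound on $|S(\textbf{L}_i,\textbf{L}_j)-S^p(\textbf{L}_i,\textbf{L}_j)|$. Writing $S(\textbf{L}_i,\textbf{L}_j)=n\sum_{kl}(P_{kl}/n-Q_{kl}/n)^2/(R_{kl}/n)$, the function $(a,b)\mapsto (a-b)^2/(a+b)$ is Lipschitz on the region where $a+b\geq \eta>0$ (this is where the hypothesis $r_{ij}\geq\eta$ is used, together with the fact that $R_{kl}/n$ stays bounded below once $P_{kl}/n+Q_{kl}/n$ is close to $r_{kl}$), with Lipschitz constant controlled by $1/\eta$ and the magnitude of the arguments, which are all at most $1$. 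Carrying the factor of $n$ through and bounding the increment per cell by something like $\frac{\epsilon}{\eta}(\frac{1}{\eta}+2)$, then summing over the four cells, produces exactly the claimed $\delta=\frac{4n\epsilon}{\eta}(\frac{1}{\eta}+2)$. Taking the maximum over pairs preserves the bound, since $|\max_i u_i - \max_i v_i|\leq \max_i |u_i-v_i|$. I expect the fiddly part here to be being honest about the event on which $R_{kl}/n\geq \eta/2$ (say) holds so that the Lipschitz estimate is valid — this needs to be folded into the same union-bound event, which is presumably why the probability is $4k^2 e^{-n\epsilon^2/8}$ rather than $8k^2$-something.

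For part (1), under $H_0$ we have $p_{kl}=q_{kl}$ for every cell and every pair, hence $S^p(\textbf{L}_i,\textbf{L}_j)=0$ and $\mathrm{MAC}^p(\textbf{D},\textbf{D}^*)=0$. Then $\mathrm{MAC}(\textbf{D},\textbf{D}^*)=|\mathrm{MAC}-\mathrm{MAC}^p|\leq \delta$ on the good event. To get the $O(\sqrt{n\log k})$ rate I would choose $\epsilon=\epsilon_n$ so that $k^2 e^{-n\epsilon_n^2/8}\to 0$, i.e.\ $\epsilon_n$ of order $\sqrt{(\log k)/n}$ up to a slowly growing factor; then $\delta_n = \frac{4n\epsilon_n}{\eta}(\frac1\eta+2)$ is of order $n\cdot\sqrt{(\log k)/n}=\sqrt{n\log k}$, and on the complement of a vanishing-probability event $\mathrm{MAC}\leq \delta_n = O(\sqrt{n\log k})$. (One should note $k$ is treated as fixed or slowly growing here; the statement is cleanest when $\log k = o(n)$.)

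For part (2), under $H_1$ the two joint distributions differ, so there is at least one pair $(\textbf{L}_{i_0},\textbf{L}_{j_0})$ and some cell with $p_{kl}\neq q_{kl}$; hence $S^p(\textbf{L}_{i_0},\textbf{L}_{j_0})=n\sum_{kl}(p_{kl}-q_{kl})^2/r_{kl}\geq cn$ for a fixed constant $c>0$ (using $r_{kl}\leq 2$ in the denominator and the fixed gap in the numerator), so $\mathrm{MAC}^p(\textbf{D},\textbf{D}^*)\geq cn$. Applying the concentration bound with a fixed small $\epsilon$ (so $\delta=O(n)$ but with a constant strictly smaller than $c$, which we can arrange by taking $\epsilon$ small), we get $\mathrm{MAC}(\textbf{D},\textbf{D}^*)\geq \mathrm{MAC}^p - \delta \geq (c-o(1))n$ with probability at least $1-4k^2 e^{-n\epsilon^2/8}\to 1$. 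The one caveat to flag is that this argument implicitly assumes the chosen design points $\textbf{L}_1,\dots,\textbf{L}_k$ are such that some pair actually detects the difference between the distributions — i.e.\ the separation must be visible on at least one of the partitions being examined; I would state this as a mild condition (or note that it holds generically / for suitably chosen $\textbf{L}_i$, e.g.\ sample points) so that the "$H_1$" in the theorem is the effective alternative.
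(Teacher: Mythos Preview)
Your proposal follows essentially the same route as the paper's proof: Hoeffding-type concentration for $\hat p_{kl},\hat q_{kl}$, an add-and-subtract (equivalently, Lipschitz) bound on each $|S-S^p|$ using the lower bound $r_{kl}\geq\eta$, a union bound over the $O(k^2)$ pairs, and then choosing $\epsilon_n\asymp\sqrt{(\log k)/n}$ for part~(1) and fixed small $\epsilon$ for part~(2). In fact your two caveats --- controlling the event $\{\hat r_{kl}\geq\eta/2\}$ so the denominator stays bounded below, and requiring that at least one pair $(\textbf{L}_{i_0},\textbf{L}_{j_0})$ actually separates the two distributions under $H_1$ --- are points the paper's own proof passes over without comment, so your write-up is, if anything, slightly more careful.
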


\textbf{Remark 5.} \textit{As stated in the theorem, the upper bound of MAC under the null hypothesis is of order $\sqrt{n\log(k)}$, but its lower bound under the alternative hypothesis is of order $n$. This implies the consistency of MAC.}

\textbf{Remark 6.} \textit{The upper bound of MAC under the null hypothesis does not depend on where we define the local statistics, i.e, the choice of $\textbf{L}_i$, but only depends on the number of these locations.  This property is key to our algorithm of model checking.
}

\subsection{The Algorithm}
In this section, we introduce our algorithm for statistical model assessment with discussions on other methods.  The algorithm goes in two steps: (1) estimate the discrepancy between the true model and the working model by statistic defined in equation (2.7) and (2) get the p-value of this discrepancy.  In Step 1, the discrepancy is estimated by comparing the distribution of observations from the true model and that of predictive observations obtained from the working model via Bootstrap.  
In Step 2, the p-value is obtained through simulation. The algorithm is summarized in \textbf{Algorithm 1}, which is called predictive assessment of statistical models,  since it is on the basis of predictive observations of the response. 

The idea underlying our algorithm is similar to that in \cite{tsay1992model} and \cite{gelman1996posterior}.  \cite{tsay1992model} estimates the $(1-\alpha)$ confidence interval for each observation $y_i (i=1, 2, \cdots, n)$ based on the working model. If the working model is good enough, this the interval shall cover the observation $y_i$ with probability close to 95\%. Based on this principle, the working model is checked using a plot called acceptance envelopes.  However, there is no overall assessment of the working model.  The Bayesian predictive assessment method in \cite{gelman1996posterior} assess the working model through a predefined discrepancy statistic, for example, $\chi^2$ discrepancy, which only focuses on one aspect of the discrepancy between the true model and the working model.  In addition,  for the Bayesian predictive assessment method, if we have a wrong model for the error term $\varepsilon$ in model~(\ref{tmodel}), we also obtain a small p-value, which is undesired.
Comparing with these two methods, the key ingredient in our algorithm is that we use MAC statistic, $T_B$ in \textbf{Algorithm 1}, to measure the discrepancy between the true model and the working model through comparing the distribution of observations from the true model and that of predictive observations from the working model.  This ingredient allow us to give a full characterization of the discrepancy between the true model $f(\textbf{X})$ and the working model $g(\textbf{X}|\theta)$. In other words, $T_B$ can be regarded as the distance between the true model and the working model.

As one can see, the local statistics $S(\textbf{L}_i,\textbf{L}_j)$ are dependent, and some of them are strongly correlated. This phenomenon occurs also in U-statistics. The reduced U-statistics (e.g., \cite{blom1976some} and \cite{brown1978reduced}), counting only a part of all terms, is proposed to improve the computational efficiency when the sample size is very large.  In our case, we should, in principle, construct local statistics based on all kinds of different partitions of the space (see Remark 3), and take the maximum of them as our test statistic. However, it is enough to consider the partitions based on observations of random vectors.  Due to the dependence of local statistics constructed from these observations, we can only focus on a part of them to reduce the computational cost without much loss of power when the sample size is very large. There are many kinds of strategies to determine where to define the local statistics. Here is two examples: (1) we can randomly select $k$ observations from observations of $(\textbf{X}, \textbf{Y})$ and $(\textbf{X}, \textbf{Y}^*)$, and define local statistics on these selected observations; (2) one may cluster the observations of $(\textbf{X}, \textbf{Y})$ and $(\textbf{X}, \textbf{Y}^*)$ into $k$ clusters and define local statistics on the center of each cluster.

\begin{algorithm}[tb]
	\caption{Goodness-of-Fit Test for Statistical Models}
	\label{alg:algorithm}
	\textbf{Input}: observations: $\{(\textbf{x}_i,y_i): i=1, 2, \cdots, n\}$; working model: $g(\textbf{X}|\hat{\theta}_n)$;
	
	\textbf{Output}: p-value of checking whether observations come from the working model
	
	\begin{algorithmic}[1] 
		\FOR{$b=1, \cdots, B$}
		\FOR{$i=1, 2, \cdots, n$}
		\STATE{generate Bootstrap sample of $y_i$, $y_i^{*b}$, by Bootstrapping the residuals or parametric  Bootstrap}
		\ENDFOR
		\STATE{Compute the test statistic defined in equation (2.7), $T_b$, between observations $\{(\textbf{x}_i,y_i): i=1, 2, \cdots, n\}$ and Bootstrap sample $\{(\textbf{x}_i,y_i^{*b}): i=1, 2, \cdots, n\}$}
		\ENDFOR
		\STATE{Estimate the the discrepancy between true model and working model as $T_B=\frac{1}{B}\sum_{b=1}^B T_b$ (taking average to reduce the randomness of $T_b$ from randomness of $y^{*b}$).}
		\STATE{Compute the p-value of $T_B$ based on the empirical null distribution of MAC obtained through simulation.}
	\end{algorithmic}
\end{algorithm}

\section{Numerical Results}
In this section, we show the performance of the proposed method through some examples.  Firstly, four simulation examples are used to show the power of MAC on two-sample distribution comparison. Secondly, five examples are used to illustrate the power of \textbf{Algorithm 1} on model checking.
\subsection{Power of MAC}
In this section, simulation studies are presented to explore the power of MAC developed for testing the null hypothesis $(\textbf{X},\textbf{Y}^*)\overset{d}{=}(\textbf{X}, \textbf{Y})$.  In each of the following examples, we set $\textbf{X} \sim N(0, I_d)$, and set the sample size of the observations $n=\{100, 200, 300\}$.  The power is estimated from 1000 independent simulations at significant level 5\%.

\textbf{Example 1: One-dimensional case.} $Y^*|X \sim N(X, 1)$ and $Y|X \sim N(cX, 1)$, with $c$ varying from 1 to 5.

\textbf{Example 2: Two-dimensional case.} $Y^*|X \sim N(X_1+X_2, 1)$ and $Y|X \sim N(X_1+cX_2, 1)$ with $c$ varying from 1 to 5 and $X=(X_1, X_2)$.

\begin{center}
	\begin{figure}[!htbp]
		\centering
		\includegraphics[scale=0.8]{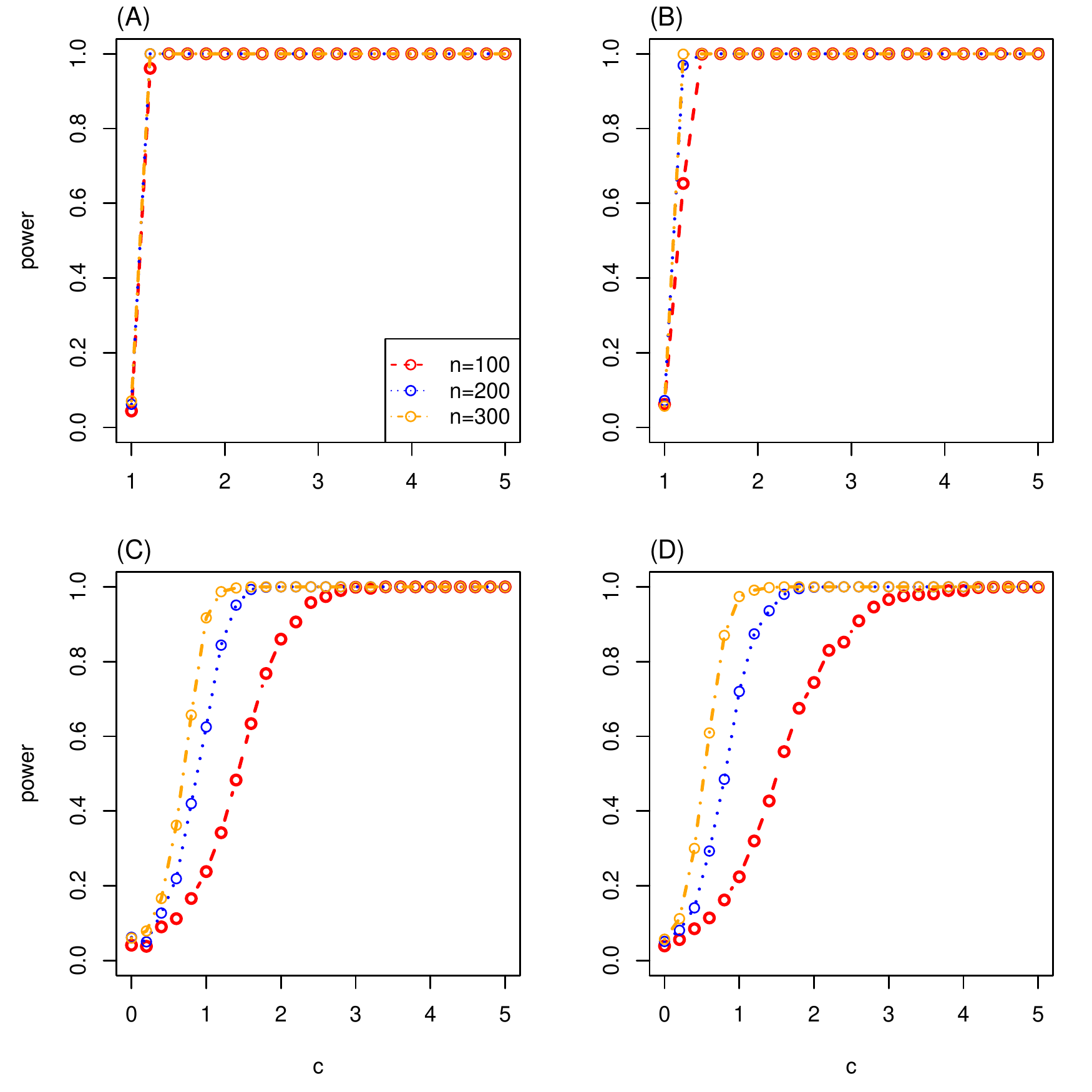}\\
		\caption{Power of MAC for example 1-4 under different settings of $c$: (A) Example 1; (B) Example 2; (C) Example 3; (D) Example 4.}
		\label{simu}
	\end{figure}
\end{center}

\textbf{Example 3: Two-dimensional case with interaction.} $Y^*|X \sim N(X_1+X_2, 1)$ and $Y|X \sim N(X_1+X_2+cX_1X_2, 1)$ with $c$ varying from 0 to 5 and $X=(X_1, X_2)$.

\textbf{Example 4: Five-dimensional case with interaction.} $Y^*|X \sim N(X_1+X_2+X_3+X_4+X_5, 1)$ and $Y|X \sim N(X_1+X_2+X_3+X_4+X_5+cX_1X_2X_3X_4X_5, 1)$ with $c$ varying from 0 to 5 and $X=(X_1, X_2, \cdots, X_5)$.

The power of MAC for these four examples under different settings of $c$ is presented in Figure~\ref{simu}.  As shown by the results, MAC controls the type-I error around the targeted significant level 5\%, and always perform well when the sample size is large enough (consistent with our theoretical result, see Theorem 3). In addition, comparing with the lower dimensional case with the same sample size, MAC tends to be less sensitive to the difference between two distributions, when the dimension of $X$ goes higher. In other words, to attain the same power, MAC needs more samples for higher dimensional case than lower dimensional case.  In principle, MAC can be applied to \textbf{Problem B} under any setting of $p$ and $q$. The only requirement of good performance of MAC for the higher dimensional case is a larger sample size, as indicated by our simulations.

\subsection{Examples of Model Assessment}

\begin{center}
	\begin{figure}[!htbp]
		\centering
		\includegraphics[scale=0.55]{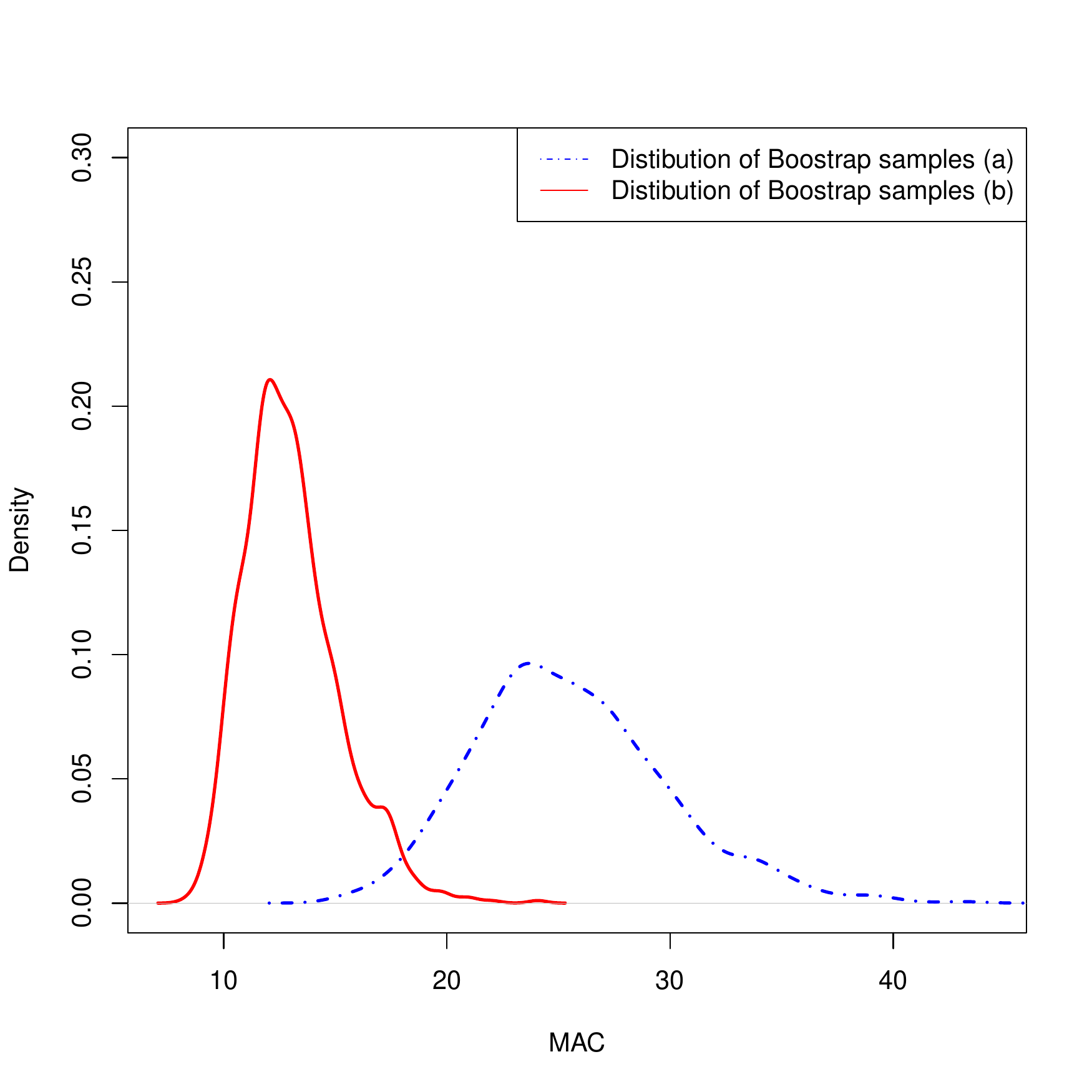}\\
		\caption{Distribution of MAC statistic in Example 5. The red solid line is the distribution of Bootstrap samples of MAC from linear model for case (b), and the blue dash-dotted line is the distribution of Bootstrap samples of MAC from linear model for case (a). }
		\label{example4}
	\end{figure}
\end{center}

In this section, we show numerical results on  five examples of statistical model checking based on \textbf{Algorithm 1}. Two of them are based on synthesized dataset, and the other three are real examples.  In all these examples, we take $B=1000$, and the corresponding null distribution of MAC is obtained from 1000 independent simulations.

\textbf{Example 5:  Simple linear regression model.}
In this example, we generate 200 observations from model (a) $Y=10(X-0.2)^2+\epsilon$ and (b) $Y=4.7X+\epsilon$, where $X \sim U(0,1)$ and $\epsilon \sim N(0,1)$. Firstly, we focus on model (a), and the scatter plot of 200 observations from this model is given in Figure 1. With this dataset in hand, one may want to try a linear model to fit these observations, and get the blue solid line in Figure 1, i.e., $Y=-1.29+5.93X$. All of these coefficients are significant different from 0, and the adjusted $R^2$ is 0.654. Classical method for model checking is to check the behavior of residuals.  For this example, it is easy to see that there is some pattern in the residual plot (right panel in Figure 1), and confirm that linear model is not good enough. But, we do not have any quantification on this under-fitting.  The p-value from \textbf{Algorithm 1} is 0.004, which also indicates the underfitting of linear model to the data.  To show that the adjusted $R^2$ does not tell any information about the quality of the model,  we focus on model (b).  Based on observations from model (b), we obtain adjusted $R^2$  as 0.654. Apply our method to these observations, we get a p-value 0.486, which indicates that  linear model is acceptable/good enough to fit these observations. Note that the adjusted $R^2$ for linear model based on observations from model (a) and model (b) are the same, however, the linear model is under-fitting to the observations from model (a), and it is good enough for observations from model (b). These facts tell us that the adjusted $R^2$ does not tell much about the discrepancy between the true model and the working model.  

Furthermore, to better understand \textbf{Algorithm 1} and the behavior of MAC statistic, we showed in Figure~\ref{example4} the distribution of Bootstrap samples of MAC ($T_b$'s in \textbf{Algorithm 1}) for linear model under case (a) and case (b). Comparing with these two distributions, we see that the distribution of Bootstrap samples of MAC shifts towards left when the working model is not good enough, which is consistent with the claim made in Theorem 3.

\textbf{Example 6:  Simple logistical regression.} In this example, we generate 200 observations from model $\log(P(Y=1)/P(Y=0))=2X+X^2$ with $X\sim N(0,1)$. Now, we fit the data using model $\log(P(Y=1)/P(Y=0))=\beta_0+\beta_1X$, and get the estimate:  $\hat{\beta}_0=0.5939\pm0.033$ with $\text{p-value} < 2\times 10^{-16}$ and $\hat{\beta}_1=0.197 \pm 0.034$ with $\text{p-value} = 2.66\times 10^{-8}$.  There is no residuals for this example, and thus no adjusted $R^2$.  Applying \textbf{Algorithm 1} to this dataset, we have p-value=0.002, which shows the underfitting of linear logistical model to the data.

\textbf{Example 7:  Auto MPG data (Multivariate regression model).} In this example, we revisited the linear model for Auto MPG dataset  proposed by \citet{quinlan1993combining}. The dataset is available on the Machine Learning Repository at the University of California at Irvine (\url{http://archive.ics.uci.edu/ml/datasets/Auto+MPG}). In this dataset, the response is milers per gallon, and there are seven predictors: the number of cylinders, engine displacement, horsepower, vehicle weight, time to accelerate from 0 to 60 m.p.h, model year, and origin of the car. As the origin of the car contains more than two categories, we define two indicator variables to denote whether it comes from America/Europe. After removing items with missing values, there are 392 samples left. Now, we apply \textbf{Algorithm 1} to evaluate the linear model for this dataset, and the results are shown in Figure~\ref{autompg}.  In the left panel of the figure, the red line shows the distribution of MAC statistic under the null and the blue dashed line shows the distribution of Bootstrap samples of MAC ($T_b$'s in \textbf{Algorithm 1}).  There is almost no overlap between these two distributions, which means the working model is far away from the true model (the p-value from \textbf{Algorithm 1}  is 0). We also show in the right panel of Figure~\ref{autompg} the residual plot from linear model. It seems there is some pattern in the residual but we do not know how many evidence in there against the linear model.

\begin{center}
	\begin{figure}[!htbp]
		\centering
		\includegraphics[scale=0.55]{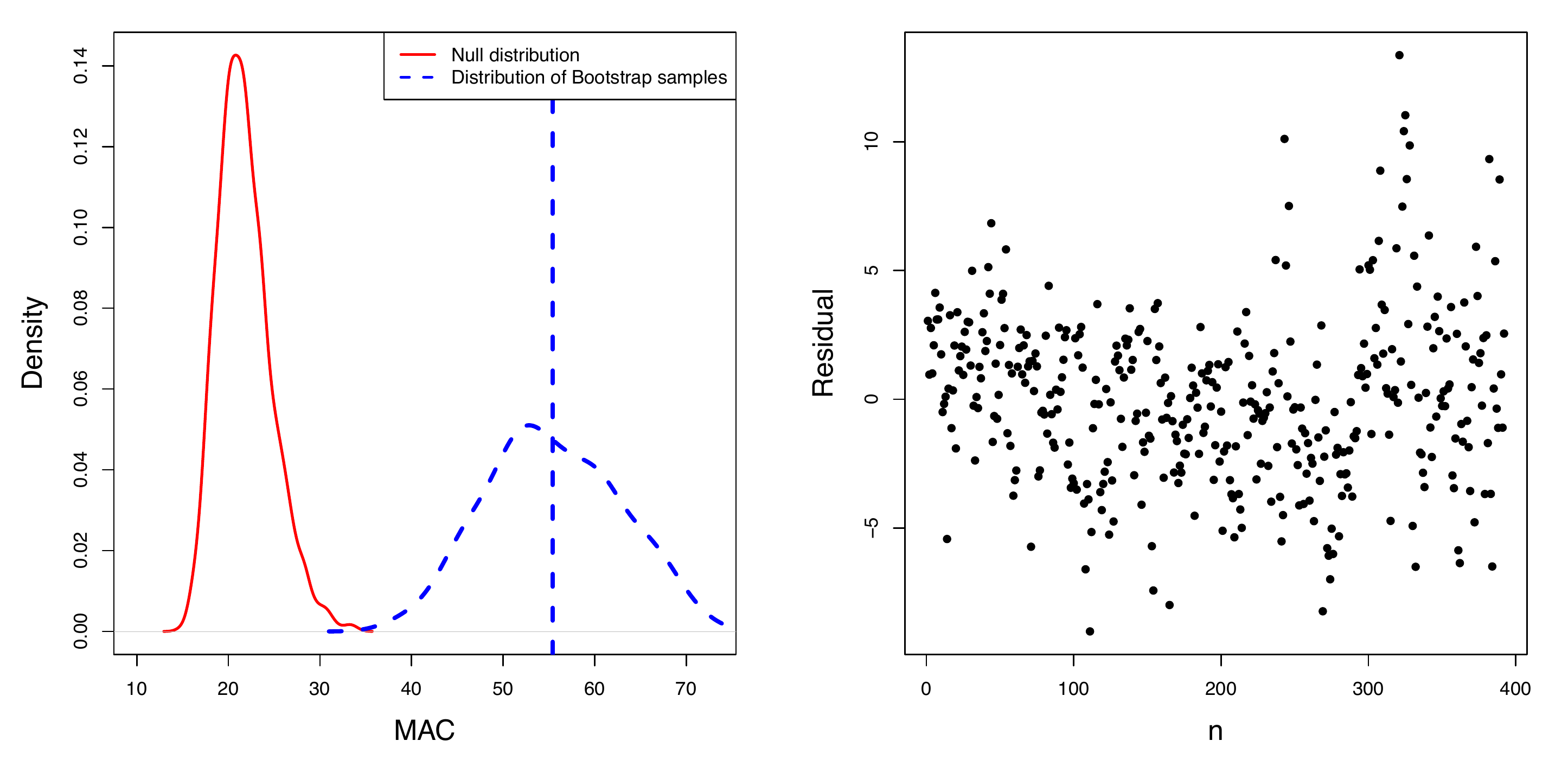}\\
		\caption{Results for Example 7. Left Panel: The empirical null distribution of MAC and the distributions of bootstrap samples of MAC based on linear model. The dashed vertical line is the mean of Bootstrap samples of MAC. Right Panel: the residual plot from linear model.}
		\label{autompg}
	\end{figure}
\end{center}

\textbf{Example 8: Patent data (Poisson Model).} We study the relationship of the number of patent applications to research and development (R\&D) spending and sales using data from 1976 for 70 pharmaceutical and biomedical companies from \citet{hall1988r}. The same data set have been analyzed in \cite{czado2009predictive}, where the authors compared the Poisson model with negative binomial model, and concluded that negative binomial model is a little better than Poisson model. Now, we apply \textbf{Algorithm 1} to evaluate these two models, and get $T_B=13.02$ for Poisson model and $T_B=11.78$ for negative binomial model, which indicates the negative binomial model is a little better. This is consistent with the conclusion in \cite{czado2009predictive}.
Furthermore, we get p-value$>0.1$ for these two models and show that these two models are applicable to this dataset.

\textbf{Example 9: Annual Wolf sunspot data (Time series model).}  In this example, we consider the annual Wolf sunspot numbers from 1700 to 1980, which is available in R dataset by calling data(``sunspots") in R. 
There are three models proposed for this dataset, the AR(9) model, TAR model and AR(2) log-normal model, as shown in the following. These models were analyzed before by \citet{tsay1992model}.

\textbf{AR(9) model}
\begin{eqnarray*}
	z_t&=&6.96+1.21z_{t-1}-0.45z_{t-2}-0.17z_{t-3}+0.20z_{t-4}-0.13z_{t-5} \\
	&+& 0.03z_{t-5}+0.01z_{t-7}-0.03z_{t-8}+0.21z_{t-9}+a_t,
\end{eqnarray*}
where $a_t \sim N(0, 221.24)$.

\textbf{TAR model}
\begin{center}
	$$ z_t= \left\{\begin{array}{ll}
	10.88+1.869z_{t-1}-1.556z_{t-2}+0.086z_{t-3}+0.326z_{t-4}+a_{1t}&  {\rm if} ~~~ z_{t-3}\leq 32.3 \\
	\\
	8.726+0.679z_{t-1}+0.064z_{t-2}-0.217z_{t-3}+0.044z_{t-4}-0.118z_{t-5}\\-0.005z_{t-6}+0.192z_{t-7}-0.285z_{t-8}+0.242z_{t-9}-0.123z_{t-10}\\+0.246z_{t-11}+2a_{2t}&  {\rm if} ~~~ z_{t-3}> 32.3 \\
	\end{array}
	\right. $$
\end{center}
where $a_{1t} \sim N(0, 275.7)$ and $a_{2t} \sim N(0, 82.9)$.

\textbf{AR(2) log-normal model}
\begin{equation*}
z_t=1.6759z_{t-1}-0.7840z_{t-2}+b_t,
\end{equation*}
where $b_t \sim \text{log-normal}(13.88, 153.39)$.

Now, we apply \textbf{Algorithm 1} to evaluate these three models, and we obtain $T_B=94.04$ for AR(9) model, $T_B=92.51$ for TAR model and $T_B=573.0$ for AR(2) log-normal model.  The empirical null distribution of MAC and the distributions of bootstrap samples of MAC from AR(9) and TAR model are shown in Figure~\ref{time}.  According to these results, we concluded that (1) AR(2) log-normal model is very questionable, which is consistent with the finding in  \citet{tsay1992model}; (2) AR(9) and TAR model show similar overall performance, but still have room for improvement, although \citet{tsay1992model} showed that both of them are reasonable in terms of spectral density (second-order property). 

\begin{center}
	\begin{figure}[!htbp]
		\centering
		\includegraphics[scale=0.6]{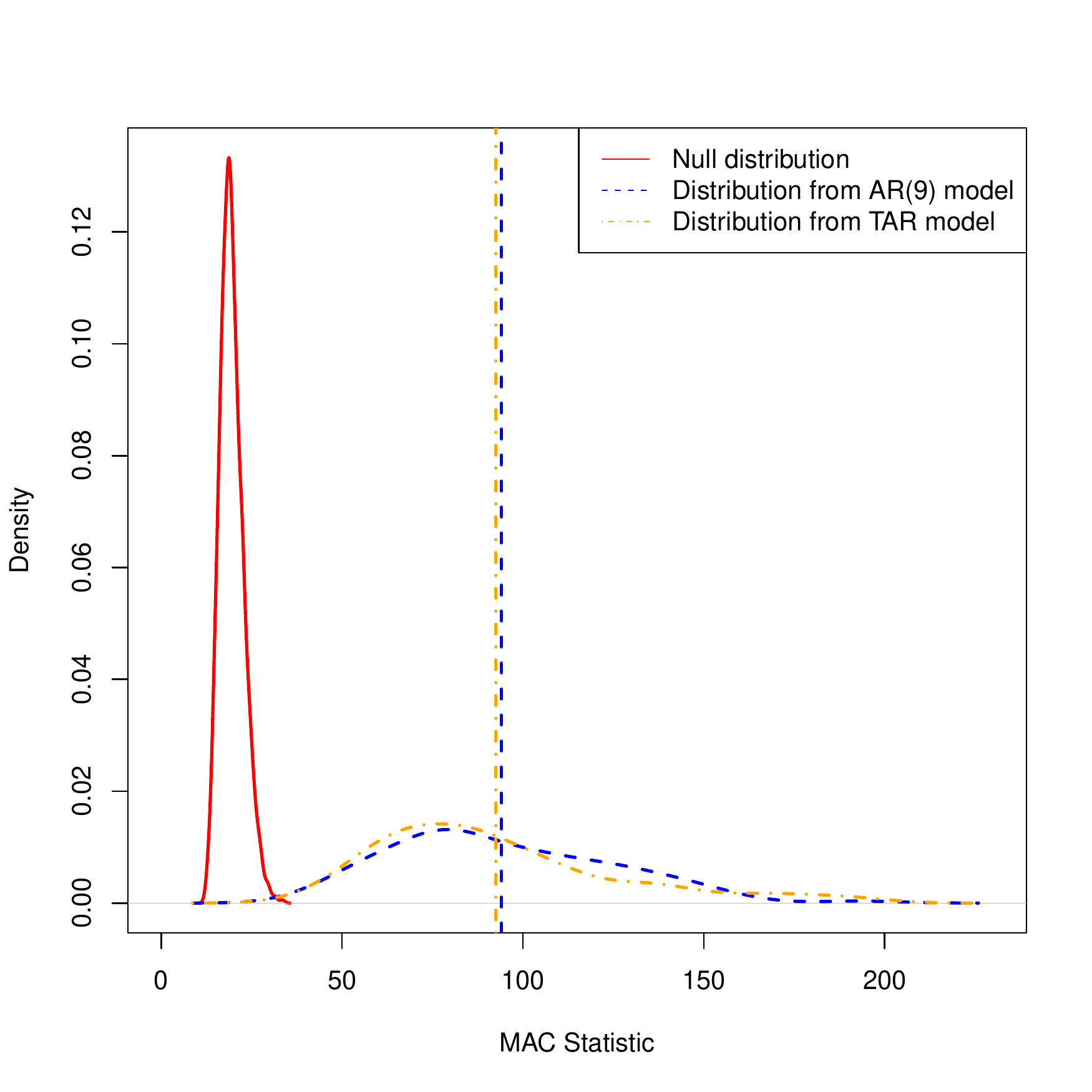}\\
		\caption{Results for Example 9. The empirical null distribution of MAC and the distributions of bootstrap samples of MAC from AR(9) and TAR model. The dashed vertical line is the mean of bootstrap samples of MAC.}
		\label{time}
	\end{figure}
\end{center}

\section{Discussion}
In this paper, we try to seek a answer to the question that how can we believe in our statistical models through the bridge between the problem of measuring the distance between the true model and the working model and two-sample distribution comparison. To the best of our knowledge, it is the first time to notice such kind of connection between these two fundamental statistical problems. We believe that this new connection will be very useful to solve problem in many application areas. Numerical results from both simulation studies and real data analysis showed the advantages of our method. 

We directly target on measuring the distance between the true model and the working model, however, other methods focused on a special aspect of the model through a predefined statistic. In other words, if one only cares about whether some aspects of the model meet the requirement, methods proposed in, for example, \citet{tsay1992model} and \citet{gelman1996posterior} are more appropriate. If one cares about the prediction performance of the model, our method is preferred. 

There are also some other interesting problems related to the framework proposed in this paper. For example, we only show that the MAC statistic defined in equation (2.7) has a upper bound under the null hypothesis and a lower bound under the alternative.  It is very interesting to investigate the limit distribution of MAC.  Furthermore, a detailed mathematical relationship is in need to better understand the connection between two-sample distribution comparison and measuring the distance between the true model and the working model. These interesting topics are left as future works.
\section*{Acknowledgment}
This work is partially supported by .....
\bibliographystyle{plainnat}
\bibliography{ref}

\section*{Appendix}

\renewcommand\thesubsection{\Alph{subsection}.}
\renewcommand\thesubsection{A.\arabic{subsection}}
\renewcommand{\thefigure}{A.\arabic{figure}}
\renewcommand{\thetable}{A.\arabic{table}}
\renewcommand{\thetheorem}{A.\arabic{table}}
\renewcommand{\theequation}{A.\arabic{equation}}
\setcounter{table}{1}
\setcounter{figure}{1}
\setcounter{theorem}{1}

\subsection{Discussion on MAC statistic}

One may notice that MAC is sensitive to local discrepancy between two distributions.  However, it is insensitive to the global discrepancy.  To better understand this point, let's see an example. Assume that we have $n=10$ local statistics, $Z_1, \cdots, Z_n$ for the statistical hypothesis test problem in hand. Define $Z_{max}=\max_{1\leq i \leq n} Z_i$, and we will reject the null hypothesis if, say, $Z_{max}>5$.  If we use the mean of these local statistics, $Z_{mean}=\frac{1}{n}\sum_{1\leq i \leq n} Z_i$, we will reject the null hypothesis if, say, $Z_{mean}>2$.  Now, we consider two different cases. (C1) the observation of $\{Z_1, \cdots, Z_{10}\}$ is $Z_i=4.5$ for $1\leq i \leq 10$.  In this case, $Z_{mean}$ rejects the null but $Z_{max}$ accepts it.  (C2) the observation of $\{Z_1, \cdots, Z_{10}\}$ is $Z_i=1$ for $1 \leq i \leq 9$ and $Z_{10}=10$. In this case, $Z_{mean}$ accepts the null but $Z_{max}$ rejects it.  Thus, the mean focuses on the global difference and ignores the local difference, and the maximum does the opposite.  We need to combine the power of mean and maximum, and define the new statistic as
\begin{equation}\label{statistic}
T(\textbf{D}, \textbf{D}^*)= T_{mean}(\textbf{D}, \textbf{D}^*)+T_{max}(\textbf{D}, \textbf{D}^*)I(T_{max}(\textbf{D}, \textbf{D}^*)>\tau_n),
\end{equation}
where $T_{mean}(\textbf{D}, \textbf{D}^*)=\frac{2}{k(k-1)}\sum_{i,j} S(\textbf{L}_i,\textbf{L}_j)$ is the mean of all local statistics and $T_{max}(\textbf{D}, \textbf{D}^*)=\text{MAC}(\textbf{D}, \textbf{D}^*)$.
Based on Theorem~\ref{consist}, we know that  if $\tau_n =O(\sqrt{n\log n})$, the limit distribution of $T$ under $H_0$ is the same as that of $T_{mean}(\textbf{D}, \textbf{D}^*)$. 
The following result is a direct corollary of Theorem~\ref{consist}.
\begin{corollary}\label{consist2} Following above notations and assuming that $r_{ij}\geq \eta>0 $ for any $i$ and $j$,  and $\tau_n =O(\sqrt{n\log k})$ then for any $\epsilon>0$, let $\delta=\frac{4n\epsilon}{\eta}(\frac{1}{\eta}+2)$, we have
	(1) under $H_0$, $P(T(\textbf{D}, \textbf{D}^*)\leq O(\sqrt{n\log k}))\rightarrow 1$, as $n\rightarrow \infty$; (2) under $H_1$, $P(T(\textbf{D}, \textbf{D}^*)> O(n))\rightarrow 1$, as $n\rightarrow \infty$.
\end{corollary}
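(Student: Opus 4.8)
The plan is to derive both statements as immediate consequences of Theorem~\ref{consist}, by controlling the three ingredients of $T(\textbf{D},\textbf{D}^*)$ separately: the mean term $T_{mean}$, the maximum term $T_{max}=\mathrm{MAC}(\textbf{D},\textbf{D}^*)$, and the indicator $I(T_{max}>\tau_n)$. First I would record the elementary bound that, since each local statistic is nonnegative, $0 \le T_{mean}(\textbf{D},\textbf{D}^*) \le T_{max}(\textbf{D},\textbf{D}^*) = \mathrm{MAC}(\textbf{D},\textbf{D}^*)$; hence any high-probability upper bound on $\mathrm{MAC}$ furnished by Theorem~\ref{consist} is automatically a high-probability upper bound on $T_{mean}$. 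Likewise, the comparison between the empirical and population local statistics in the proof of Theorem~\ref{consist} (the concentration inequality $P(|\mathrm{MAC}-\mathrm{MAC}^p|>\delta)<4k^2e^{-n\epsilon^2/8}$) applies verbatim to each $S(\textbf{L}_i,\textbf{L}_j)$ and therefore, after averaging, to $T_{mean}$ versus its population analogue $\frac{2}{k(k-1)}\sum_{i,j}S^p(\textbf{L}_i,\textbf{L}_j)$.

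For part (1), under $H_0$ we have $p_{kl}=q_{kl}$ for every cell, so $\mathrm{MAC}^p(\textbf{D},\textbf{D}^*)=0$; Theorem~\ref{consist} then gives $P(\mathrm{MAC}(\textbf{D},\textbf{D}^*)\le O(\sqrt{n\log k}))\to 1$. Since $T_{mean}\le\mathrm{MAC}$, the same bound holds for $T_{mean}$. It remains to handle the term $T_{max}\,I(T_{max}>\tau_n)$: because $\tau_n=O(\sqrt{n\log k})$ and $\mathrm{MAC}=O(\sqrt{n\log k})$ with probability tending to one, on that event either $T_{max}\le\tau_n$ (and the indicator vanishes) or $\tau_n<T_{max}\le O(\sqrt{n\log k})$ (and the contributed term is itself $O(\sqrt{n\log k})$). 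Either way $T(\textbf{D},\textbf{D}^*)=T_{mean}+T_{max}I(T_{max}>\tau_n)\le O(\sqrt{n\log k})$ with probability $\to 1$, which is the claim.

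For part (2), under $H_1$ there is at least one location pair $(\textbf{L}_i,\textbf{L}_j)$ with $p_{kl}\neq q_{kl}$ for some cell, so $S^p(\textbf{L}_i,\textbf{L}_j)$ is of order $n$ and hence $\mathrm{MAC}^p(\textbf{D},\textbf{D}^*)=O(n)$; Theorem~\ref{consist} yields $P(\mathrm{MAC}(\textbf{D},\textbf{D}^*)>O(n))\to 1$. On that event $T_{max}=\mathrm{MAC}>O(n)$, which in particular exceeds $\tau_n=O(\sqrt{n\log k})$ for $n$ large, so the indicator equals $1$ and $T(\textbf{D},\textbf{D}^*)\ge T_{max}>O(n)$; combined with $T_{mean}\ge 0$ this gives $P(T(\textbf{D},\textbf{D}^*)>O(n))\to 1$. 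I would close by noting that $\delta$ is carried along only to invoke the concentration statement of Theorem~\ref{consist} and plays no further role.

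The main obstacle is essentially bookkeeping rather than a genuine difficulty: one must be careful that the order symbols are used consistently (the implied constants in the $O(\sqrt{n\log k})$ bounds for $T_{mean}$, for $\mathrm{MAC}$, and for $\tau_n$ may differ, but their sum is still $O(\sqrt{n\log k})$), and that the separation $\tau_n=o(n)$ is what guarantees the indicator switches on under $H_1$ and contributes only a lower-order term under $H_0$. No new probabilistic estimate beyond Theorem~\ref{consist} is required.
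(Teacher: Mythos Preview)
Your proposal is correct and matches the paper's approach: the paper itself does not give a separate proof but simply states that the result ``is a direct corollary of Theorem~\ref{consist},'' and your argument is precisely the routine verification of that claim, using $0\le T_{mean}\le T_{max}=\mathrm{MAC}$ together with the order bounds from Theorem~\ref{consist} to control each piece of $T(\textbf{D},\textbf{D}^*)$. Your care with the indicator term (noting that under $H_0$ it either vanishes or contributes at most $O(\sqrt{n\log k})$, and under $H_1$ it switches on because $\tau_n=o(n)$) is exactly the bookkeeping the paper leaves implicit.
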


\textbf{Remark A.1.} \textit{According to results in Theorem~\ref{consist}, we can set $\tau_n$ in real applications as the empirical $(1-\alpha)\times$ 100\% quantile of MAC statistic, where $\alpha$ is usually set as 5\%, 1\% or 0.1\%.}

With the combination of mean and maximum as in equation (A.1), i.e, $Z_{mix}=Z_{mean}+Z_{max}I(Z_{max}>5)$, we will reject the null in both cases considered in the above example.  However, the difference between two distributions are local due to the constraint of the distribution function that the total probability is 1, and it is not necessary for us to define the test statistic in the form of~(\ref{statistic}), which also involves more computational time.

\subsection{Proofs}
\begin{lemma} \label{lA1}
	Suppose that $\textbf{U}_n=(U_{n,1},\cdots,U_{n,k})$ and $\textbf{V}_n=(V_{n,1},\cdots,V_{n,k})$ are independently sampled from multinominal distributions with parameters $(n, a_1, \cdots, a_k)$ and $(n, b_1,\cdots,b_k)$, respectively. Define the column vectors $\textbf{a}=(a_1,\cdots,a_k)$ and $\textbf{b}=(b_1,\cdots,b_k)$. Under the null hypothesis $H_0$: $\textbf{a}=\textbf{b}$, we have
	$T=\sum_{i=1}^k \sum_{i=1}^k\frac{(U_{n,i}-V_{n,i})^2}{N\hat c_i}\rightarrow
	\chi_{k-1}^2 $, where
	$\hat{\textbf{c}}=\frac{\textbf{U}_n+\textbf{V}_n}{2n}$ be the
	estimate of $\textbf{a}$ under $H_0$, and $k\geq 2$.
\end{lemma}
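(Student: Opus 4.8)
The plan is to recognize $T$ as the classical two-sample Pearson homogeneity statistic for a $2\times k$ table and to obtain its limit by a multivariate central limit theorem followed by a quadratic-form computation. Throughout I would write $\mathbf{p}=(p_1,\dots,p_k)$ for the common probability vector under $H_0$ (so $\mathbf{a}=\mathbf{b}=\mathbf{p}$) and assume, as is implicit in the statement, that $p_i>0$ for every $i$; set $\widehat{\mathbf{U}}=\mathbf{U}_n/n$ and $\widehat{\mathbf{V}}=\mathbf{V}_n/n$. First I would apply the multivariate CLT to each multinomial sample: $\sqrt{n}(\widehat{\mathbf{U}}-\mathbf{p})$ and $\sqrt{n}(\widehat{\mathbf{V}}-\mathbf{p})$ each converge in distribution to $N(\mathbf{0},\Sigma)$ with $\Sigma=\mathrm{diag}(\mathbf{p})-\mathbf{p}\mathbf{p}^{\top}$, and since the two samples are independent, $\sqrt{n}(\widehat{\mathbf{U}}-\widehat{\mathbf{V}})\xrightarrow{d} N(\mathbf{0},2\Sigma)$.

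Next I would rewrite the statistic. Here $N=2n$ (the total count, chosen so that $N\widehat c_i=U_{n,i}+V_{n,i}$, matching the local statistic in (2.6)), so
\[
T=\sum_{i=1}^k\frac{(U_{n,i}-V_{n,i})^2}{U_{n,i}+V_{n,i}}=\frac{n}{2}\sum_{i=1}^k\frac{(\widehat U_i-\widehat V_i)^2}{\widehat c_i}.
\]
By the weak law of large numbers $\widehat c_i\xrightarrow{P}p_i>0$ for each $i$, so Slutsky's theorem lets me replace $\widehat c_i$ by $p_i$ in the denominators at the cost of an $o_P(1)$ term. Writing $\mathbf{W}_n=\sqrt{n/2}\,(\widehat{\mathbf{U}}-\widehat{\mathbf{V}})\xrightarrow{d}\mathbf{W}\sim N(\mathbf{0},\Sigma)$ and $D=\mathrm{diag}(\mathbf{p})$, this gives $T=\mathbf{W}_n^{\top}D^{-1}\mathbf{W}_n+o_P(1)$, hence by the continuous mapping theorem $T\xrightarrow{d}\mathbf{W}^{\top}D^{-1}\mathbf{W}$.

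It then remains to identify $\mathbf{W}^{\top}D^{-1}\mathbf{W}$ as $\chi_{k-1}^2$. Put $\mathbf{Z}=D^{-1/2}\mathbf{W}$, so $\mathbf{Z}\sim N(\mathbf{0},\,D^{-1/2}\Sigma D^{-1/2})=N(\mathbf{0},\,I_k-\mathbf{s}\mathbf{s}^{\top})$ with $\mathbf{s}=(\sqrt{p_1},\dots,\sqrt{p_k})^{\top}$. Since $\mathbf{s}^{\top}\mathbf{s}=\sum_i p_i=1$, the matrix $P:=I_k-\mathbf{s}\mathbf{s}^{\top}$ is symmetric idempotent of rank $k-1$, i.e. the orthogonal projection onto $\{\mathbf{s}\}^{\perp}$. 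Hence $\mathbf{Z}\overset{d}{=}P\mathbf{Z}_0$ for $\mathbf{Z}_0\sim N(\mathbf{0},I_k)$, and
\[
\mathbf{W}^{\top}D^{-1}\mathbf{W}=\mathbf{Z}^{\top}\mathbf{Z}\overset{d}{=}\mathbf{Z}_0^{\top}P^{\top}P\mathbf{Z}_0=\mathbf{Z}_0^{\top}P\mathbf{Z}_0\sim\chi^2_{\mathrm{rank}(P)}=\chi^2_{k-1},
\]
which finishes the argument. (Equivalently, one can invoke the standard fact that the Pearson statistic for a $2\times k$ contingency table has null limit $\chi^2_{(2-1)(k-1)}$, or argue by conditioning on the cell totals $U_{n,i}+V_{n,i}$ and using the resulting multivariate hypergeometric structure.)

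The only genuinely delicate points I anticipate are: justifying the treatment of the random denominator $\widehat c_i$ — which requires $p_i>0$ so that $\widehat c_i$ is eventually bounded away from $0$, legitimizing the Slutsky step — and correctly pinning down the degrees of freedom, which drops from $k$ to $k-1$ precisely because of the rank deficiency of $\Sigma$ (equivalently, the single linear constraint $\sum_i(\widehat U_i-\widehat V_i)=0$). Everything else is routine.
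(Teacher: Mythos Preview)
Your proposal is correct and follows essentially the same route as the paper: both apply the multivariate CLT to obtain $\frac{\mathbf{U}_n-\mathbf{V}_n}{\sqrt{2n}}\to_d N(\mathbf{0},\Sigma)$ with $\Sigma=\mathrm{diag}(\mathbf{a})-\mathbf{a}\mathbf{a}^{\top}$, replace $\widehat{\mathbf{c}}$ by $\mathbf{a}$ via Slutsky, and then identify the limiting quadratic form through the transformation $D^{-1/2}$ yielding covariance $I_k-\sqrt{\mathbf{a}}\sqrt{\mathbf{a}}^{\top}$. Your write-up is in fact more complete than the paper's, since you spell out the idempotent-projection argument that gives the $k-1$ degrees of freedom, whereas the paper simply asserts the final conclusion.
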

Proof: Let $\{X_t\}_{t=1}^n$ and $\{Y_t\}_{t=1}^n$ be the observed
category series of the count vectors $\textbf{U}_n$ and
$\textbf{V}_n$, respectively. So we have $P(X_t=i)=a_i$ and
$P(Y_t=i)=b_i$. Let $N=2n$, we define
$$
X_i^*=(X_{1,i}^*,\cdots,X_{N,i}^*)=(\delta_{i,X_1},\cdots,\delta_{i,X_n},0,\cdots,0)
$$
$$
Y_i^*=(Y_{1,i}^*,\cdots,Y_{N,i}^*)=(0,\cdots,0,\delta_{i,Y_1},\cdots,\delta_{i,Y_n})
$$
where $\delta_{i,j}=I(i=j)$.

Let
$Z_i=U_{n,i}-V_{n,i}=\sum_{k=1}^NX_{k,i}^*-\sum_{k=1}^NY_{k,i}^*$,
under $H_0$, we have $E(Z_i)=0, \text{Var}(Z_i)=2na_i(1-a_i),
\text{Cov}(Z_i,Z_j)=-2na_ia_j$. So, by the multivariate
central limit theorem, we have
$$
\frac{1}{\sqrt{2n}}\textbf{Z}=\frac{\textbf{U}_n-\textbf{V}_n}{\sqrt{2n}}\rightarrow_d
N(0,\Sigma),
$$
where $\sigma_{i,j}$, the $(i,j)$-th element of $\Sigma$, is equal
to $a_i(\delta_{ij}-a_j)$. Since $\hat{\textbf{c}} \rightarrow_p
\textbf{a}$, we have
$$
\frac{\textbf{U}_n-\textbf{V}_n}{\sqrt{2n\hat{\textbf{c}}}}\rightarrow_d
N(0,I_k-\sqrt{{\textbf{a}}}\sqrt{{\textbf{a}}}')
$$
Thus, we have $T\rightarrow \chi_{k-1}^2$, as $n\rightarrow +\infty$.

\textbf{Proof of Theorem 1}

Denote $\hat p_{ij}=P_{ij}/n$, $\hat q_{ij}=Q_{ij}/n$, and $\hat r_{ij}=\hat p_{ij}+\hat q_{ij}$ as the estimate of $p_{ij}$, $q_{ij}$ and $r_{ij}$, respectively.  Let $\Delta_{ij}=p_{ij}-q_{ij}$, and $\hat{\Delta}_{ij}=\hat p_{ij}-\hat q_{ij}$ be its estimate.

It is well known that for $\hat p_{ij}$, $\hat q_{ij}$, we have
$$
P(|\hat p_{ij}-p_{ij}|>\epsilon) \leq 2e^{-n\epsilon^2/2}, \ \ P(|\hat q_{ij}-q_{ij}|>\epsilon) \leq 2e^{-n\epsilon^2/2}.
$$
Thus, we have,
$$
P(|\hat{\Delta}_{ij}-\Delta_{ij}|>\epsilon) \leq 4e^{-n\epsilon^2/8}, \ \ P(|\hat r_{ij}-r_{ij}|>\epsilon) \leq 4e^{-n\epsilon^2/8}.
$$
For convenience, let $S$ be the local statistic defined in equation (\ref{local}), and $S^p$ be its population version defined in equation (\ref{plocal}).
\begin{eqnarray*}
	|S-S^p|&=&n|\sum_{i=1}^2\sum_{j=1}^2 \frac{(\hat p_{ij}-\hat q_{ij})^2}{\hat r_{ij}}-\sum_{i=1}^2\sum_{j=1}^2 \frac{(p_{ij}-q_{ij})^2}{r_{ij}}| \\
	&\leq&n\sum_{i=1}^2\sum_{j=1}^2|\frac{(\hat p_{ij}-\hat q_{ij})^2}{\hat r_{ij}}- \frac{(p_{ij}-q_{ij})^2}{r_{ij}}| \\
	&=&n\sum_{i=1}^2\sum_{j=1}^2|\frac{(\hat p_{ij}-\hat q_{ij})^2}{\hat r_{ij}}-\frac{(\hat p_{ij}-\hat q_{ij})^2}{r_{ij}}+\frac{(\hat p_{ij}-\hat q_{ij})^2}{r_{ij}}- \frac{(p_{ij}-q_{ij})^2}{r_{ij}}| \\
	&\leq& n\sum_{i=1}^2\sum_{j=1}^2 \frac{\hat{\Delta}_{ij}^2}{\hat r_{ij}r_{ij}}|\hat r_{ij}-r_{ij}|+\frac{|\hat{\Delta}_{ij}^2-\Delta_{ij}^2|}{r_{ij}} \\
	&\leq& n\sum_{i=1}^2\sum_{j=1}^2 \frac{|1-r_{ij}/\hat r_{ij}|}{\eta}+\frac{2|\hat{\Delta}_{ij}-\Delta_{ij}|}{\eta}
\end{eqnarray*}
Define event $E_1=\{|\hat{\Delta}_{ij}-\Delta_{ij}|>\epsilon\}$ and $E_2=\{|\hat r_{ij}-r_{ij}|>\epsilon\}$. Now, on $E_1^c \bigcap E_2^c$, we have,
\begin{equation}
|S-S^p|\leq \frac{4n\epsilon}{\eta}(\frac{1}{\eta}+2)
\end{equation}
Finally, let $\delta=\frac{4n\epsilon}{\eta}(\frac{1}{\eta}+2)$, we have
\begin{equation}
P(|S-S^p|>\delta)\leq 1- P(E_1^c \bigcap E_2^c)\leq P(E_1)+P(E_2)=8e^{-n\epsilon^2/8}.
\end{equation}
Thus, $	P(|{\rm MAC}(\textbf{W}, \textbf{V})-{\rm MAC}^p(\textbf{W}, \textbf{V})|>\delta)\leq 16n^2e^{-n\epsilon^2/8}$, and the other two conclusions follow directly.
\end{document}